%%%%%%%%%%%%%%%%%%%%%%% file typeinst.tex %%%%%%%%%%%%%%%%%%%%%%%%%
%
% This is the LaTeX source for the instructions to authors using
% the LaTeX document class 'llncs.cls' for contributions to
% the Lecture Notes in Computer Sciences series.
% http://www.springer.com/lncs       Springer Heidelberg 2006/05/04
%
% It may be used as a template for your own input - copy it
% to a new file with a new name and use it as the basis
% for your article.
%
% NB: the document class 'llncs' has its own and detailed documentation, see
% ftp://ftp.springer.de/data/pubftp/pub/tex/latex/llncs/latex2e/llncsdoc.pdf
%
%%%%%%%%%%%%%%%%%%%%%%%%%%%%%%%%%%%%%%%%%%%%%%%%%%%%%%%%%%%%%%%%%%%

\documentclass[10pt]{article}

\usepackage{url}
%\urldef{\mailsa}\path|{alfred.hofmann,ursula.barth,ingrid.beyer,natalie.brecht,|
%\urldef{\mailsb}\path|christine.guenther,frank.holzwarth,piamaria.karbach,|
%\urldef{\mailsc}\path|anna.kramer,erika.siebert-cole,lncs}@springer.com|

\usepackage{amssymb}
\usepackage{amsmath}
\usepackage{algorithm}
\usepackage{algorithmic}
%\usepackage{amsthm}
%\usepackage{epsfig}
%\usepackage{algorithm}
%\usepackage{algorithmic}
%\usepackage{graphicx}
%\usepackage{color}
%\usepackage{setspace}
%\doublespacing
%\usepackage{CJK}
%\setlength{\textheight}{9in} \setlength{\columnsep}{2.0pc}
%\setlength{\textwidth}{6.5in} \setlength{\topmargin}{0in}
%\setlength{\headheight}{0.0in} \setlength{\headsep}{0.0in}
%\setlength{\parindent}{1pc} \setlength{\oddsidemargin}{0in}
%\setlength{\evensidemargin}{0in}
%\setlength{\parskip}{4pt plus1pt minus2pt}  %¶ÎÂäÖ®¼äµÄÊúÖ±¾àÀë
%\setlength{\baselineskip}{20pt plus2pt minus1pt}%¶ÎÂäÄÚµÄÐоà
\newtheorem{theorem}{Theorem}

\newtheorem{definition}{Definition}

%\newtheorem{example}{Example}
%\newtheorem{problem}{Problem}
%\newtheorem{casestudy}{Case Study}
%\newtheorem{protocol}{Protocol}

%%remark style
%\theoremstyle{remark}
%\newtheorem{remark}[theorem]{Remark}
%\newtheorem{reduction}[theorem]{Reduction}
%\newtheorem{question}[theorem]{Question}
\newtheorem{claim}[theorem]{Claim}
%
% Proof-making commands and environments
\newcommand{\beginproof}{\medskip\noindent{\bf Proof.~}}

\newcommand{\finishproof}{\hspace{0.2ex}\rule{1ex}{1ex}}
\newenvironment{proof}{\beginproof}{\unskip\nolinebreak\finishproof\par\medskip }

\def\squareforqed{\hbox{\rlap{$\sqcap$}$\sqcup$}}
\def\qed{\ifmmode\squareforqed\else{\unskip\nobreak\hfil
\penalty50\hskip1em\null\nobreak\hfil\squareforqed
\parfillskip=0pt\finalhyphendemerits=0\endgraf}\fi}

%\newenvironment{proofof}[1]{\begin{trivlist}%
%\item[]{\flushleft\em Proof of #1. }} {\qed\end{trivlist}}

%THIS PAPER
\newcommand{\comments}[1]{}

%paper specific notations

%\def\Reals#1{\mathbb{R}^{#1}}

%def from PLC paper

%\def\vector#1#2{\mbox{Vec}_{#1}\left[ #2 \right]}

\def\setof#1{\left\{{\let\st\colon #1 }\right\}}

  \def\calG{\mathcal{G}} 
   
 \def\calT{\mathcal{T}}

 \def\BOO={=^{\hspace{0.06cm}\epsilon}_B}

  \def\00{\mathbf{0}}
   
  \def\xx{\mathbf{x}} 
   
\def\pp{\mathbf{p}}   
 \def\xx{\mathbf{x}}  
   
 \def\ww{\mathbf{w}}  
 \def\calG{\mathcal{G}}

\begin{document}

%\mainmatter  % start of an individual contribution

% first the title is needed
\title{Ranking via Arrow-Debreu Equilibrium }

% a short form should be given in case it is too long for the running head
%\titlerunning{Lecture Notes in Computer Science: Authors' Instructions}

% the name(s) of the author(s) follow(s) next
%
% NB: Chinese authors should write their first names(s) in front of
% their surnames. This ensures that the names appear correctly in
% the running heads and the author index.
%
\author{Ye Du\\
\small Department of Electrical Engineering and Computer Science\\
\small University of Michigan, Ann Arbor, USA\\
duye@umich.edu}
%\thanks{Please note that the LNCS Editorial assumes that all authors have used
%the western naming convention, with given names preceding surnames. This determines
%the structure of the names in the running heads and the author index.}%
%\and Yaoyun Shi}
%
%\authorrunning{Lecture Notes in Computer Science: Authors' Instructions}
% (feature abused for this document to repeat the title also on left hand pages)

% the affiliations are given next; don't give your e-mail address
% unless you accept that it will be published

%\author{Ye Du}
%%
%\authorrunning{Y. Du}
%% (feature abused for this document to repeat the title also on left hand pages)
%
%% the affiliations are given next; don't give your e-mail address
%% unless you accept that it will be published
%\institute{Department of Electrical Engineering and Computer Science\\
%University of Michigan, Ann Arbor, USA\\
%duye@umich.edu}

%
% NB: a more complex sample for affiliations and the mapping to the
% corresponding authors can be found in the file "llncs.dem"
% (search for the string "\mainmatter" where a contribution starts).
% "llncs.dem" accompanies the document class "llncs.cls".
%

%\toctitle{Lecture Notes in Computer Science}
%\tocauthor{Authors'Instructions}
\maketitle

\begin{abstract}
In this paper, we establish a connection between ranking theory
and general equilibrium theory. First of all, we show that the
ranking vector of PageRank or Invariant method is precisely the
equilibrium of a special Cobb-Douglas market. This gives a natural
economic interpretation for the PageRank or Invariant method.
Furthermore, we propose a new ranking method, the {\em CES
ranking}, which is {\em minimally fair}, {\em strictly monotone}
and {\em invariant to reference intensity}, but not {\em uniform}
or {\em weakly additive}. %The new CES ranking, compared with PageRank/Invariant,
%has some nonlinear property, which could be used potentially to
%find signals in a system missed those existent ranking methods.
%%\keywords{We would like to encourage you to list your keywords within
%the abstract section}
\end{abstract}

\section{Introduction}
\noindent Ranking, which aggregates the preferences of individual
agents over a set of alternatives, is not only a fundamental
problem in social choice theory but also has many applications in
real life. For instance, the well-known PageRank algorithm
\cite{bp98} is designed to rank Web pages while the Invariant
Method \cite{sv06,phv04} is proposed to evaluate the
intellectual influence of academic journals and papers. %Note that,
%mathematically, the PageRank algorithm is essentially equivalent
%to the Invariant method.

Intuitively, the PageRank and the Invariant method share a common
property in that the more ``vote" an agent gets, the higher
ranking he has. Although they work very well in practice, the
economic interpretations of their effectiveness are not obvious.
Slutzki and Volji \cite{sv06}, as well as Palacios-Huerta and
Volji \cite{phv04}, gave the first set of axioms that characterize
the Invariant method. Later, Altman and Tennenholtz \cite{at05}
gave a set of combinatorial axioms to characterize the PageRank
algorithm, while Brandt and Fischer \cite{bf07} interpreted
PageRank as a solution of a weak
tournament. %Unfortunately, those characterizations are kind of
%messy mathematically and do not provide solid economic
%explanations.

General equilibrium theory \cite{mwg} is one of the most prominent
theories in mathematical economics. It studies how a market
system, known as the ``invisible hand", makes the demands of a
market's participants equal to its supplies. Arrow and Debreu
\cite{ad54} showed that under mild conditions, a market always has
an equilibrium. The result of this research became known as the
{\em Arrow-Debreu equilibrium}.

In this paper, we will establish a connection between ranking
methods and the Arrow-Debreu equilibrium. Naturally, the
preference of one agent for another, which is usually represented
as a directed edge in a graph, can be viewed as the demand between
agents. Intuitively, the more demands a good gets, the higher
price it should have. Therefore, an equilibrium price could be a
good candidate for a ranking vector. On the other hand, the
PageRank and the Invariant method are the stationary distributions
of ergordic Markov chains. Both the existence of a PageRank or an
Invariant ranking vector and the existence of Arrow-Debreu
equilibrium can be shown via the Brouwer's fixed point theories
\cite{ad54,mwg}. We will interpret one form of a fixed point as
the other. More specifically, we will show that the ranking vector
of the PageRank or the Invariant method is indeed the equilibrium
of a special Cobb-Douglas market. To the best of our knowledge,
this is the first connection between ranking methods and the
general
equilibrium theory. %The only work, as far as we know, which has a
%similar flavor, is by Pennock and Wellman \cite{pm96}. They showed
%a mapping from a Bayesian network to a market price system with
%almost linear utility functions.
Based on our observations, we propose a new ranking method, the
CES ranking, which is minimally fair, strictly monotone, and
invariant to reference intensity, but not uniform or weakly
additive.

%{\bf Motivation:} Both the existence of the stationary
%distribution of an ergordic Markov chain and the existence of
%Arrow-Debreu equilibrium can be proved via Brower's fixed point
%theorem. We hope to encode PageRank, which is the stationary
%distribution of a special Markov chain, as the Arrow-Debreu
%equilibrium of a market. The basic idea is {\em equilibrium prices
%in a market can be used for rankings.} \cite{cmpv05}
%\\
%\\
%\noindent {\bf Main Observation:} PageRank can be interpreted as
%the equilibrium of a Cobb-Douglas market.

\section{Preliminaries}
\subsection{The Ranking Problem}
In this subsection, we will follow \cite{at05} to define {\em
ranking problems}. Let $A$ be a finite set, representing the set
of agents, and $M$ be a $|A| \times |A|$ matrix, representing the
preference relationships among the agents. A ranking problem is
represented as $\langle A,M \rangle$. %Note that, from the
%perspective of social choice theory, in our formulation of the
%ranking problem, the set of alternatives is indeed the set of
%agents.
For any $n \in \mathbb{N}$, let
$\Delta_{n}=\{(x_1,...,x_n)|\forall i, x_i \geq 0$ and $ \sum_i
x_i=1\}$.
\begin{definition} A ranking function maps a ranking problem $\langle A,M \rangle$ to a vector $\pi \in \Delta_{|A|}$.\end{definition}

\subsection{Markov chain and PageRank}
{\bf Markov chain} \cite{ks60} A discrete {\em Markov chain} is a
random process $\{X_i\}$ on a state space $S=\{s_1,...,s_n\}$ that
satisfies the {\em Markov property}:
$$P{(X_{j}|X_i,...,X_0})=P{(X_{j}|X_i})=p_{ij},$$
where $p_{ij}$ is the transition property from state $s_i$ to
$s_j$ and $\forall i, \sum_j p_{ij}=1$. Let $P$ be the transition
matrix. Correspondingly, we can define

\begin{definition} [{\bf State Transition Graph:}] Given a discrete Markov chain $\{X_i\}$, the corresponding state transition graph is $G=(V,E,W)$ where $V=S$ and
$(s_i,s_j) \in E$ iff $p_{ij}>0$, and $w_{s_i,s_j}=p_{ij}$.
\end{definition}

It is well known that if the state transition graph is strongly
connected and aperiodic, the corresponding Markov chain is called
{\em ergordic} and has a unique {\em stationary distribution}
$\pi$ such that
$$\pi=P^{T}\pi$$

\noindent {\bf PageRank} \cite{bp98} Let $G=(V, E)$ be a directed
graph with vertex set $V$ and edge set $E$. We assume that there
is no self-loop in $G$. Let $N=|V|$, and for a vertex $i\in V$,
denote by $\mathrm{out}(i)$ the out-degree of $i$. The {\em
transition matrix} of $G$ is $T=[T_{ij}]_{1\le i, j\le N}$:
\[
T_{ij}=\begin{cases}
\frac{1}{\mathrm{out}(i)} & \text {if $(i,j) \in E$}\\
0,  & \text {otherwise}\\
\end{cases}
\]
Denote by $e\in\mathbb{R}^N$ the all $1$ row vector $(1, 1,
\cdots, 1)$, and by $E\in\mathbb{R}^{N\times N}$ the all $1$
matrix. Let $\bar{T}$ be identical to $T$ except that if a row in
$P$ is all $0$, it should be replaced by $e/N$. A page without
outgoing links is called a {\em dangling} page. For some constant
$c$, $0<c<1$, the transition matrix for the PageRank Markov chain
is
$$
P=c\bar{T}+ (1-c)E/N.
$$
The PageRank $\pi$ is the stationary distribution, i.e.,$\pi
P=\pi$, of the above Markov chain $P$.

\noindent {\bf Invariant Method} \cite{sv06} In the definition of
PageRank \cite{bp98}, if the transition matrix $T$ is irreducible
(the corresponding graph is strongly connected), its unique
stationary distribution is the ranking vector. Thus, the PageRank
and the Invariant method are essentially equivalent in
mathematics.

\subsection{Arrow-Debreu equilibrium of exchange markets}
In an exchange market, there are $m$ traders and
  $n$ divisible goods. Let $\calT=\{T_1,..,T_m\}$ be the set of traders and $\calG=\{G_1,...,G_n\}$ be the set of goods. Each
  trader $i$ has an initial endowment of
  $w_{i,j} \geq 0$ of good $j$ and a utility function
  $u_{i}: \mathbb{R}_{+}^n \rightarrow \mathbb{R}$.
The individual goal of trader $T_i$ is to
  obtain a new bundle of goods that maximizes his utility.
Let
   $\mathbf{x}_{i} \in \mathbb{R}_{+}^n$ be the bundle of goods of $T_i$ after
   the exchange, where $x_{i,j}$
   is the amount of good $j$.
Naturally, the demand cannot exceed the supply:
  {$\sum_{i} x_{i,j} \leq \sum_{i} w_{i,j},$} for every good $j$.

We use $\pp\in \Delta_{n}$ to denote a price vector, where
  $p_j$ is the price of $G_j$. For any trader $T_i$, given $\pp$, we let $\xx_i^{*}(\pp)$ denote the
  bundle of goods that maximize his utility under the budget
  constraint:
\begin{equation*}
\xx_i^{*}(\pp)=\text{argmax}_{\hspace{0.08cm}\xx\in
\mathbb{R}_+^n,\
  \xx\cdot \pp\le \ww_i\cdot \pp}\  u_i(\xx).
\end{equation*}

%We use $\mathcal{X}=\{\xx_i\in \mathbb{R}_+^n:i\in [m]\}$ to
%denote an allocation of the market:
%  For each trader $T_i\in \calT$,
%  $\xx_i \in \mathbb{R}_+^n$ is the amount of goods that $T_i$ receives.
%In particular, the amount of $G_j$ that $T_i$ receives in
%$\mathcal{X}$ is
%  $x_{i,j}$.

\begin{definition} [{\bf Arrow-Debreu equilibrium}]
A market equilibrium is a price vector $\pp\in \Delta_{n}$ such
that the market clears:
\begin{quote}
For every good $G_j\in \calG$, $\sum_{i\in [m]} x_{i,j}^{*}(\pp)
 \le \sum_{i\in [m]} w_{i,j};$ If $p_j>0$, then $\sum_{i\in [m]}
x_{i,j}^{*}(\pp)  = \sum_{i\in [m]} w_{i,j}.$
\end{quote}
\end{definition}

%\begin{definition}[Economy Graph]
%Given an exchange market, we define a directed graph
%  $G=(\calT,E)$ as follows.
%The vertex set of $G$ is exactly $\calT$, the set of traders in
%the market. For every
%  two traders $T_i\ne T_j\in \calT$, we have an edge from $T_i$ to
%  $T_j$ if $T_j$ possesses a good for which $T_i$ has demand. \footnote{The direction of the edge in the economy graph here is
% different from the definitions of economy graphs in \cite{cmpv05,m97}.}
%  $G$ is
%called the \emph{economy graph} of the market. We say the market
%is \emph{strongly connected} if $G$ is \emph{strongly connected}.
%\end{definition}

\subsection{CES Utility Functions}

{\bf CES utility functions:} \cite{cmpv05,mwg} The CES (Constant
Elasticity of Substitution) function over a bundle of goods
$(x_{i1},...,x_{in})$ is the family of utility functions
$u_i(x_{i1},...,x_{in})$
$=(\sum_{j=1}^{n}\alpha_{ij}x_{ij}^{\rho_i})^{1/\rho_i}$, where
$-\infty < \rho_i <1$, $\rho_i \neq 0$ and $\alpha_{ij} \geq 0$.
The parameter $1/(1-\rho_i)$ is called the {\em elasticity of
substitution}. The CES utility function has a very nice property:
its demand functions have explicit forms. That is, given a
strictly positive price vector $\pi \in \mathbb{R}_{++}^{n}$, the
demand $x_{ij}$ is
$$x_{ij}=\frac{\alpha_{ij}^{1/(1-\rho_i)}}{\pi_j^{1/(1-\rho_i)}}\times \frac{\sum_{k}\pi_kw_{ik}}{\sum_{k}\alpha_{ik}^{1/(1-\rho_i)}\pi_k^{-\rho_i/(1-\rho_i)}}$$
There are three important utility functions within the CES
category.
\begin{enumerate} \item $\rho_i \rightarrow 1$ corresponds to linear utility functions, where $u_i(x_{i1},...,x_{in})=\sum_{j}\alpha_{ij}x_{ij}$. In this case, the
set of goods that the agent wants are perfect substitutes for each
other. \item $\rho_i \rightarrow -\infty$ corresponds to Leontief
utility functions. The Leontief utility function, in general, has
the form of $u_i(x_{i1},...,x_{in})=\min\limits_{j:
\beta_{ij}>0}\frac{x_{ij}}{\beta_{ij}}$, where $\beta_{ij} \geq
0$. In this case, the set of goods that the agent wants are
perfect complement of each other. \item $\rho_i \rightarrow 0$
corresponds to Cobb-Douglas utility functions. The Cobb-Douglas
utility function, in general, has the form of
$u_i(x_{i1},...,x_{in})=\prod_{j}x_{ij}^{\beta_{ij}}$, where
$\beta_{ij} \geq 0$. This demand function is a perfect balance of
substitution and complementarity effects \cite{cmpv05}.
\end{enumerate}

\section{PageRank/Invariant Method V.S. a Cobb-Douglas Market}
In this section, we will establish a connection between
PageRank/Invariant Method and Arrow-Debreu equilibrium. We do this
by showing a more general theorem about Markov chains.
\begin{theorem} \label{thm:main} Given an ergordic Markov chain, there is a
mapping from the Markov chain to a Cobb-Douglas market, such that
the stationary distribution of the Markov chain is precisely the
Arrow-Debreu equilibrium of the Cobb-Douglas market.\end{theorem}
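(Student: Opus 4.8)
The plan is to build, from the Markov chain on states $s_1,\dots,s_n$ with transition matrix $[p_{ij}]$, an exchange market with $n$ goods and $n$ traders, both indexed by the states, and to choose the data so that the market-clearing equation reproduces the stationary equation $\pi_j=\sum_i p_{ij}\pi_i$. Concretely, I would give trader $i$ the single-good endowment $w_{ii}=1$ and $w_{ij}=0$ for $j\neq i$, so that each state owns exactly one unit of its own good. For utilities I would use the Cobb-Douglas functions $u_i(x_{i1},\dots,x_{in})=\prod_j x_{ij}^{\beta_{ij}}$ with exponents $\beta_{ij}=p_{ij}$; this is a legitimate normalized choice because each row of a transition matrix satisfies $\sum_j p_{ij}=1$, which makes the denominator $\sum_k\alpha_{ik}$ in the CES demand equal to $1$. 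Thus the economic ``demand'' of trader $i$ for good $j$ is weighted by precisely the transition probability $p_{ij}$, matching the intuition that an out-edge in the chain is a demand.

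The second step is the direct verification. Writing a price vector as $\mathbf q\in\Delta_n$ (I avoid $\mathbf p$ to prevent a clash with the $p_{ij}$) and using the Cobb-Douglas demand obtained as the $\rho_i\to 0$ limit of the CES demand recorded above, namely $x_{ij}=\beta_{ij}\,(\sum_k q_k w_{ik})/q_j$, the budget of trader $i$ collapses to $\sum_k q_k w_{ik}=q_i$ because of the single-good endowment. Hence
\begin{equation*}
x_{ij}=p_{ij}\,\frac{q_i}{q_j}.
\end{equation*}
The supply of good $j$ is $\sum_i w_{ij}=1$, so the clearing condition $\sum_i x_{ij}=\sum_i w_{ij}$ becomes
\begin{equation*}
\frac{1}{q_j}\sum_i p_{ij}\,q_i = 1,\qquad\text{i.e.}\qquad q_j=\sum_i q_i\,p_{ij}.
\end{equation*}
This is exactly the stationary equation $\mathbf q=P^{T}\mathbf q$, so a clearing price in $\Delta_n$ is the same object as a stationary distribution in $\Delta_n$.

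To finish I would invoke ergodicity. An ergodic chain has a unique stationary distribution $\pi$, and it is strictly positive; strict positivity is what lets me apply the explicit Cobb-Douglas demand formula (valid only for $\mathbf q\gg 0$) and makes every clearing inequality an equality, so the second clause of the equilibrium definition, ``if $p_j>0$ then demand equals supply,'' holds automatically. Therefore $\pi$ is an Arrow-Debreu equilibrium of the constructed market, and conversely any equilibrium solves the same eigen-equation in $\Delta_n$, whose solution is unique, so the equilibrium is precisely $\pi$.

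I expect the main obstacle to be the boundary behaviour rather than the central computation, which is essentially a one-line identity. Specifically, to claim the equilibrium is \emph{precisely} $\pi$ I must rule out degenerate equilibria carrying a zero price on some good, since the clean demand formula only applies when $\mathbf q\gg 0$. I would dispose of this by a short argument: if the zero-price set $Z=\{j:q_j=0\}$ were nonempty, strong connectivity would supply an edge from its complement into $Z$, i.e. a trader $i$ with $q_i>0$ and $p_{ij}>0$ for some $j\in Z$; its Cobb-Douglas demand $x_{ij}=p_{ij}q_i/q_j$ for the free good $j$ is then unbounded, violating the supply constraint $\sum_i x_{ij}\le 1$. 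Hence every equilibrium price is strictly positive and the reduction to the stationary equation is unconditional.
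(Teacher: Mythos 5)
Your construction is exactly the paper's: the same Cobb-Douglas market (trader $i$ owns one unit of good $i$, exponents $\beta_{ij}=p_{ij}$), and the same central identity that market clearing at strictly positive prices, via the demand $x_{ij}=p_{ij}q_i/q_j$, is literally the stationary equation $q_j=\sum_i p_{ij}q_i$. Where you genuinely diverge is in how positivity and existence are handled. The paper imports them wholesale: it cites Theorem~1 of Codenotti \emph{et al.}\ to assert that the market has a strictly positive equilibrium (using strong connectivity of the economy graph), and then runs the clearing computation in one direction only, equilibrium $\Rightarrow$ stationary distribution, leaving the converse and uniqueness as an implicit remark after the proof. You instead make the argument self-contained and two-sided: the stationary distribution $\pi$ of the ergodic chain is strictly positive, so the explicit demand formula applies at prices $\pi$ and clearing holds, which proves existence of an equilibrium directly without any external theorem; and your boundary argument (a nonempty zero-price set $Z$ would, by strong connectivity, admit a trader $i$ with $q_i>0$ and $p_{ij}>0$ for some $j\in Z$, whose demand for the free good explodes past the unit supply) rules out degenerate equilibria, so every equilibrium solves the stationary equation and hence equals $\pi$ by uniqueness for ergodic chains. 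What your route buys is (i) independence from the cited existence theorem, and (ii) an explicit proof that the equilibrium is unique and \emph{precisely} $\pi$, which is what the theorem statement literally asserts and which the paper only gestures at; what the paper's route buys is brevity and a framing that generalizes immediately to the broader CES setting used later in the paper.
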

\begin{proof} The general idea of the proof is to reduce the state transition graph of an ergordic Markov chain to the
economy graph of a Cobb-Douglas market. Given the state transition
graph $G=(V,E,W)$ and the transition matrix $P$, we can construct
a Cobb-Douglas economy graph as follows: for $i \in [1..n]$, there
is a trader $T_i$ corresponding to each state $s_i$, and there is
a directed link from $T_i$ to $T_j$ iff $(s_i,s_j) \in E$; for
trader $T_i$, let $N(T_i)$ be the set of outgoing neighbors of
$T_i$. The utility function of $T_i$ is $u_i(x_i)=\prod_{j \in
N(T_i)}x_{ij}^{p_{ij}}$, where $p_{ij}$ is the transition
probability. Initially $T_i$ has one unit of the good $G_i$ but no
other goods. We call such a Cobb-Douglas economy $M$. We claim
that the market equilibrium of $M$ is also the stationary
distribution of $P$.

First of all, since $G$ is strongly connected and the Cobb-Douglas
utility function belongs to the CES utility function class,
according to Theorem 1 of Codenotti {\em et al.} \cite{cmpv05} $M$
has a strictly positive equilibrium. By the demand function of CES
utility function, when $\rho_i \rightarrow 0$,
$$x_{ij}=\frac{p_{ij}\pi_i}{\pi_j}.$$
By the definition of Arrow-Debreu equilibrium, for every good with
strictly positive price, its demand must be equal to its supply.
Thus,
$$\sum_{i}x_{ij}=\sum_{i}\frac{p_{ij}\pi_i}{\pi_j}=1.$$
Equivalently,
$$\sum_{i}p_{ij}\pi_i=\pi_j.$$
Thus, $\pi$ is the stationary distribution of the Markov chain
$P$.
\end{proof}
Actually, by the above reduction, we also implicitly show that $M$
has a unique equilibrium. Most importantly, since the PageRank or
Invariant method is a special Markov chain, the ranking vector of
the PageRank or the Invariant method can also be interpreted as
the
equilibrium of a Cobb-Douglas market.\\

\noindent {\bf Remarks:} Eaves \cite{eaves85} showed that the
computation of an equilibrium for a Cobb-Douglas market can be
reduced to solving a linear equation system. Although it was not
explicitly claimed in \cite{eaves85}, Eaves's result implies that
an equilibrium of a Cobb-Douglas market actually corresponds to a
principle eigenvector of a stochastic matrix. In Theorem
\ref{thm:main}, we show the reverse direction of Eaves' reduction.
That is a principle eigenvector of a stochastic matrix corresponds
to an equilibrium of a special Cobb-Douglas market. Thus, our
result, which is essentially different from Eaves' in terms of
motivations, is a complement of the result in \cite{eaves85}.\\

\noindent {\bf Economic Interpretations:} It is believed that the
validity of PageRank comes from the fact that the Markov chain is
a good model for the Web surfing behavior of Web users. In web
graph, a link from page $p$ to page $q$ means that a Web user at
page $p$ may find the content of page $q$ is useful. Thus, a link
in web graph means a vote or reference. Intuitively, the more
votes a page gets, the more important it is. Indeed, for a Web
user, his goal is to maximize his information needs by following
outgoing links of a page to visit other pages. Thus, in our
Cobb-Douglas economy graph, each Web page is corresponding to an
agent, the content of the page is corresponding to the good the
agent initially owns, and a link from $p$ to $q$ means that the
agent on page $p$ has a demand for the content of page $q$.
Intuitively, the more ``demand" a page gets, the more important it
is. %Actually the demand could be interpreted in Web graph in two
%ways: on one hand, the content of page $p$ may be similar to the
%content of page $q$, thus from the perspective of Web users, they
%are substitutes of each other; on the other hand, the content of
%page $p$ could be a complement of another page $q$.
% By changing the utility
%functions of each traders from the spectrum of CES utility
%functions, we may introduce a new series of ranking methods.

Theorem \ref{thm:main} provides a new perspective to view
PageRank. That is the substitution and complementarity effects of
outgoing links. For instance, suppose we have a directory page of
a university, which has outgoing links pointing to the home pages
of each of the unversity's the departments. If a Web user clicks
one of the outgoing links, it is unlikely that he will click any
other. Thus, for this page, its outgoing links are more likely to
be substitutes for each other than complements. On the other hand,
suppose we have a news page, which has outgoing links pointing to
related news pages. A Web user who clicks one of the outgoing
links is likely to click another one. Thus, for this page, its
outgoing links are more likely to be complements for each other
than substitutes. By Theorem \ref{thm:main} and the properties of
the Cobb-Douglas utility function, the set of pages that a Web
page points to is a mix of the substitution and complementarity
effects with elasticity of substitution $1$ in PageRank.

\section{Ranking via Arrow-Debreu Equilibrium}
The Cobb-Douglas utility function corresponds to the CES utility
function with $\rho \rightarrow 0$. Thus, by choosing CES utility
functions with different elasticities, we can naturally extend the
idea of PageRank to a new spectrum of ranking algorithms. We
propose the ranking method, which is called {\em CES ranking},
below.
\begin{algorithm}
%\begin{algorithmic}
\caption{CES ranking} 1. Given the agents set $A$, choose a CES
utility function $u_i$ for each agent $i \in A$ and set the
initial endowment $w_i$ of it as $\forall j \neq i, w_{ij}=0$ but
$w_{ii}=1$. The new ranking problem is $\langle A,
\{\alpha_{ij}\}_{i \in A, j \in A}, \{\rho_i\}_{i \in A}, \{w_i|i \in A \} \rangle$. W.L.O.G., for any $i$, let $\sum_j \alpha_{ij}=1$. \\
2. %Construct the economy graph $G$ and make it strongly connected
%by modifying the utility functions of agents as PageRank.
If for agent $i$, $\alpha_{ij}=0$ for all $j$, set
$\alpha_{ij}=1/|A|$ for each $j$. Hence, $u_i=(\sum_{j=1}^{n}(1/|A|)x_{ij}^{\rho_i})^{1/\rho_i}$.\\
3. For each agent $i$, update $\alpha_{ij}$ to be
$\alpha_{ij}*\beta+(1/|A|)*(1-\beta)$ for every $j$, where
$\beta=0.85$. Correspondingly, the updated utility function is $u_i=(\sum_{j=1}^{n}(\alpha_{ij}*\beta+(1/|A|)*(1-\beta))x_{ij}^{\rho_i})^{1/\rho_i}$\\
4. Construct the economy graph $G$ with respect to the CES economy defined above. It is easy to see that $G$ is strongly connected.\\
5. Compute an equilibrium of $G$ and use it as the ranking vector.
%\end{algorithmic}
\end{algorithm}

Note that the new CES ranking problem $\langle A,
\{\alpha_{ij}\}_{i \in A, j \in A}, \{\rho_i\}_{i \in A}, \{w_i|i
\in A \} \rangle$ is a generalization of the ranking problem
$\langle A,M \rangle$ defined in the Preliminaries. Now we discuss
the existence, uniqueness, and efficiency of the CES ranking, as
well as some other properties related to ranking.

\noindent {\bf Existence of a ranking vector:} By Theorem 1 in
\cite{cmpv05}, as long as the economy graph $G$ is strongly
connected, there is always a strictly positive equilibrium. By the
definition of the CES ranking, it is obvious that the economy
graph of it is strongly connected. Thus, a ranking vector always
exists.

\noindent {\bf Uniqueness:} According to \cite{cmpv05}, for CES
utility functions with $-1 \leq \rho <1$, the set of equilibria is
convex. We further show that:
\begin{claim} The CES ranking has a unique ranking vector if $\forall i, \rho_i \geq 0$. \end{claim}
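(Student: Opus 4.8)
The plan is to give a direct uniqueness argument of weak gross-substitutes type, built on the explicit CES demand rather than on the convexity fact quoted from \cite{cmpv05}. First I would record the shape of demand in this particular economy. Writing $s_i = 1/(1-\rho_i)$, the hypothesis $0 \le \rho_i < 1$ forces $s_i \ge 1$. Since trader $i$ owns only good $i$ (with $w_{ii}=1$), its wealth at prices $\pi$ is exactly $\pi_i$, so the CES demand formula specializes to
$$x_{ij}(\pi) = \frac{\alpha_{ij}^{s_i}\,\pi_j^{-s_i}\,\pi_i}{\sum_l \alpha_{il}^{s_i}\,\pi_l^{1-s_i}},$$
which is homogeneous of degree $0$ in $\pi$. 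By the existence result (Theorem 1 of \cite{cmpv05}) together with strong connectivity of the economy graph, every equilibrium is strictly positive, so at any equilibrium each good clears with equality, $\sum_i x_{ij}(\pi)=1$ for all $j$ (the supply of good $j$ being $1$).

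Next I would suppose, for contradiction, that $p\neq q$ are two equilibria in $\Delta_n$, and normalize the comparison by the maximal price ratio: set $\bar\mu = \max_l q_l/p_l$ and $q'=q/\bar\mu$, which is again an equilibrium by homogeneity. With $r_l = q'_l/p_l$ we have $r_l\le 1$ for all $l$, with equality exactly on $S=\{l:r_l=1\}$, and $S\neq[n]$ because $p\neq q$ on the simplex. I would then fix any $j\in S$ and compare $x_{ij}(q')$ with $x_{ij}(p)$ trader by trader. Because $q'_j=p_j$, the factor $\pi_j^{-s_i}$ is unchanged; the wealth factor contributes $q'_i/p_i = r_i\le 1$; and in the denominator each term obeys $(q'_l)^{1-s_i}\ge p_l^{1-s_i}$, since $q'_l\le p_l$ and the exponent $1-s_i$ is $\le 0$. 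This last step is exactly where $\rho_i\ge 0$ (i.e.\ $s_i\ge 1$) is used. Consequently the denominator is larger at $q'$, and altogether $x_{ij}(q')\le r_i\,x_{ij}(p)\le x_{ij}(p)$ for every $i$.

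Summing over $i$ and invoking market clearing at both equilibria gives $1=\sum_i x_{ij}(q')\le\sum_i x_{ij}(p)=1$, so the inequality must be an equality term by term. Since the damping step forces $\alpha_{ij}>0$ for every pair and all prices are strictly positive, $x_{ij}(p)>0$; hence term-wise equality forces $r_i=1$ for every $i$, i.e.\ $S=[n]$, contradicting $S\neq[n]$. Therefore $p=q$ and the equilibrium, hence the ranking vector, is unique. The hard part is the equality-case analysis: weak gross substitutes only delivers the soft inequality $x_{ij}(q')\le x_{ij}(p)$, and the whole argument turns on converting the resulting global equality into the pointwise conclusion $r_i=1$. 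That conversion is precisely where strict positivity of every $\alpha_{ij}$ (guaranteed by the damping) and the choice $j\in S$ become essential, whereas the sign condition $\rho_i\ge 0$ enters only through $s_i\ge 1$; reversing it would flip the denominator comparison and break the proof.
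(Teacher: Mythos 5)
Your proof is correct, but it takes a genuinely different route from the paper's. The paper proves the claim by verifying that the aggregate excess demand of the CES ranking economy satisfies the (strict) gross-substitute property: it compares aggregate demand at two price vectors differing in a single coordinate $l$, using exactly the same ingredients you use ($\alpha_{ij}>0$ from the damping step, strict positivity of equilibria, and $\rho_i\ge 0$ entering through the sign of the exponent $-\rho_i/(1-\rho_i)$), and then invokes as a black box the theorem quoted from \cite{mwg} that a GS excess demand admits at most one equilibrium. Your argument instead inlines the proof of that uniqueness theorem: the max-ratio normalization $q'=q/\bar\mu$, the comparison at a good $j$ where the rescaled vector touches $p$, and the equality-case analysis constitute the standard textbook argument for uniqueness under gross substitutes, specialized to the explicit CES demand of this economy. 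What your version buys is self-containedness and some extra precision: you need only weak (WGS-type) inequalities, closing the equality case with $\alpha_{ij}>0$ and $x_{ij}(p)>0$; the boundary case $\rho_i=0$ is handled seamlessly; and homogeneity plus the simplex normalization are used explicitly rather than implicitly. It also sidesteps an awkwardness in the paper's write-up, which begins ``suppose we have two equilibria $\pi'$ and $\pi$'' differing in one coordinate --- impossible for two distinct vectors of the simplex; what the paper's computation actually establishes is the GS property for two arbitrary such \emph{price vectors}, which is what the cited theorem needs. What the paper's route buys is brevity: once GS is verified, uniqueness is a one-line citation, and the GS property itself is of independent interest.
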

In order to prove this claim, we first introduce the definition of
gross substitute (GS).
\begin{definition} \cite{mwg} For any $j$, let $z_j=\sum_i x_{ij}-\sum_{i}w_{ij}$ be the excess demand function for $G_j$. The function $z(.)$ has the
gross substitute property if whenever $\pi^{'}$ and $\pi$ are such
that, for some $l$, $\pi^{'}_{l} > \pi_{l}$ and $\pi^{'}_{j} =
\pi_{j}$ for $j \neq l$, we have $z_{j}(\pi^{'})
> z_{j}(\pi)$ for $j \neq l$.\end{definition}
If in the above definition the inequalities are weak, the property
is referred to as weak gross substitute (WGS). It is well known
that:
\begin{theorem} \cite{mwg} If the aggregate excess demand function of an exchange economy has the GS property, the economy has at most one equilibrium.\end{theorem}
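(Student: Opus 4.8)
The plan is to run the classical Arrow--Hurwicz--Block uniqueness argument, which extracts a contradiction from the gross substitute property by comparing two candidate equilibria coordinate by coordinate. I would invoke the standard structural properties of an exchange-economy excess demand $z(\cdot)$, chiefly homogeneity of degree zero, so that $z(\lambda\pi)=z(\pi)$ for every $\lambda>0$, together with the market-clearing conditions from the definition of Arrow--Debreu equilibrium. In the GS setting equilibrium prices are strictly positive (this holds in particular for the CES/strongly-connected economies of the paper), so that market clearing gives $z_j(\pi)=0$ at \emph{every} good $j$, not merely $z_j(\pi)\le 0$.

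First I would suppose, for contradiction, that there are two equilibria $\pi$ and $\pi'$ that are not positive scalar multiples of one another. Using homogeneity I would renormalize $\pi'$: replacing it by $\lambda\pi'$ with $\lambda=\max_j \pi_j/\pi'_j$ yields (after renaming) $\pi'_j\ge \pi_j$ for all $j$, with equality in the coordinate $k$ achieving the maximum. Since the two vectors are not proportional, the renormalized $\pi'$ is not equal to $\pi$, so the set $J=\{j:\pi'_j>\pi_j\}$ is nonempty, and $k\notin J$. Next I would pass from $\pi$ to $\pi'$ by raising the prices of the goods in $J$ one at a time, holding all other coordinates fixed at each step. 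At every such step the price of good $k$ is unchanged while the price of some good $l\neq k$ is strictly increased, so the GS property forces $z_k$ to strictly increase; chaining these single-coordinate perturbations over the whole sequence gives $z_k(\pi')>z_k(\pi)$. Because $\pi_k>0$, market clearing at $\pi$ gives $z_k(\pi)=0$, whence $z_k(\pi')>0$, which contradicts the equilibrium condition $z_k(\pi')\le 0$ at $\pi'$. Hence no two non-proportional equilibria exist, i.e.\ the economy has at most one equilibrium (up to the price normalization placing it in $\Delta_n$).

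I expect the main obstacle to be the correct handling of the GS property, which is stated only for a \emph{single} price change, across the multi-coordinate move from $\pi$ to $\pi'$: one must verify that the sequential single-coordinate argument is legitimate and that good $k$'s price genuinely stays fixed throughout, so that the strict increases accumulate. The two supporting delicacies are (a) the normalization guaranteeing $\pi'\ge\pi$ with a touching coordinate $k$ where $\pi'_k=\pi_k$, and (b) the positivity $\pi_k>0$ needed to upgrade the equilibrium inequality $z_k(\pi)\le 0$ to the equality $z_k(\pi)=0$; without interior equilibria the final contradiction would weaken, so this positivity assumption is where care is required.
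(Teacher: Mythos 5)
The paper never proves this statement: it is quoted as a known result and attributed to \cite{mwg}, so there is no internal proof to compare against. Your argument is correct, and it is essentially the classical Arrow--Block--Hurwicz uniqueness proof (Proposition 17.F.3 in Mas-Colell, Whinston and Green) that the cited source gives: normalize by homogeneity so that $\pi' \ge \pi$ with equality at some coordinate $k$, raise the strictly larger coordinates one at a time so that GS forces $z_k(\pi') > z_k(\pi) = 0$, and contradict market clearing at $\pi'$; your two flagged delicacies (the single-coordinate chaining and the strict positivity of equilibrium prices needed for $z_k(\pi)=0$ and for the normalization to be well defined) are exactly the right points of care and are handled adequately.
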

Now we can prove the above claim.\\
\begin{proof}
\footnote{It is well known \cite{cmpv05} that the CES utility
functions satisfy WGS when $\rho \geq 0$. However, WGS does not
imply the uniqueness of equilibrium.} In the CES ranking, by the
Theorem 1 and Lemma 1 in \cite{cmpv05}, as long as the economy
graph $G$ is strongly connected, an equilibrium exists and every
equilibrium is strictly positive. Suppose we have two equilibria
$\pi^{'}$ and $\pi$ such that for $l$, $\pi^{'}_{l} > \pi_{l}$ and
$\pi^{'}_{h} = \pi_{h}$ for $h \neq l$. Note that $\forall i, j,
\alpha_{ij}>0$. Thus, for any $j \neq l$,
\begin{eqnarray*} \sum_i x_{ij}&=&\sum_i\frac{\alpha_{ij}^{1/(1-\rho_i)}}{\pi_j^{1/(1-\rho_i)}}\times \frac{\pi_i}{\sum_{k}\alpha_{ik}^{1/(1-\rho_i)}\pi_k^{-\rho_i/(1-\rho_i)}}\\
&=& \sum_i \alpha_{ij}^{1/(1-\rho_i)}\times
\frac{\pi_i/\pi_j}{\sum_{k}\alpha_{ik}^{1/(1-\rho_i)}(\pi_k/\pi_j)^{-\rho_i/(1-\rho_i)}}\\
&<&\frac{\alpha_{lj}^{1/(1-\rho_l)}\times
\pi_l^{'}/\pi_j^{'}}{\sum_{k}\alpha_{lk}^{1/(1-\rho_l)}(\pi_k^{'}/\pi_j^{'})^{-\rho_l/(1-\rho_l)}}+\sum_{i
\neq l}
\frac{\alpha_{ij}^{1/(1-\rho_i)}\times \pi_i^{'}/\pi_j^{'}}{\sum_{k}\alpha_{ik}^{1/(1-\rho_i)}(\pi_k^{'}/\pi_j^{'})^{-\rho_i/(1-\rho_i)}}\\
&=& \sum_i x_{ij}^{'}\end{eqnarray*} Thus, in the CES ranking, the
excess demand function has the GS property. Therefore, its
equilibrium and the ranking vector are unique.
\end{proof}

However, when $\rho<-1$, there may be multiple equilibria sets.
Actually, for ranking problems, we do not have to insist on the
uniqueness of ranking vectors. That is because in most cases,
there are different ranking criteria for one ranking problem. It
is not surprising that different criteria induce different
rankings.

\noindent {\bf Efficiency of Computation:} When $-1 \leq \rho<1$,
an equilibrium of a CES market can be computed via convex
programming \cite{cmpv05}. Thus, it is in polynomial time.
However, for some special utility functions (such as the Leontief
utilty function), it is PPAD-hard \cite{csvy08} to compute an
equilibrium of it. However, for ranking problems that have
relatively small sizes or do not require the real time computation
of ranking vectors, the efficiency may not be a serious concern.

In the next section, we study five natural properties, which are
satisfied by the PageRank and the Invariant method, with respect
to the CES ranking. First, we extend the concept of minimally fair
\cite{at07} to the CES ranking.
\begin{definition} A ranking system is minimally fair if when for any $i,j$, $\alpha_{ij}=0$, the
ranking score of agent $i$ equals that of $j$ for agent $i,j \in
A$.
\end{definition}
\begin{claim} The CES ranking is minimally fair. \end{claim}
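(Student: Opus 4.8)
The plan is to specialize the CES ranking algorithm to the minimal-fairness hypothesis and exploit the symmetry it forces on the utility parameters. Under the hypothesis every entry of the preference data vanishes, i.e.\ $\alpha_{ij}=0$ for all $i,j\in A$. First I would run the preprocessing steps on this input. In Step 2, every agent $i$ satisfies $\alpha_{ij}=0$ for all $j$, so each agent is reset to $\alpha_{ij}=1/|A|$ for all $j$. In Step 3, each weight is updated to $\alpha_{ij}\cdot\beta+(1/|A|)\cdot(1-\beta)=(1/|A|)\beta+(1/|A|)(1-\beta)=1/|A|$. Hence after preprocessing the economy is completely symmetric in the goods, with $\alpha_{ij}=1/|A|$ for every pair $i,j$, while each trader $i$ still owns exactly one unit of $G_i$ and nothing else.

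Next I would guess the uniform price vector $\pi_j=1/|A|$ for all $j$ and verify that it is an equilibrium directly from the closed-form CES demand. Writing $n=|A|$ and substituting $\alpha_{ij}=1/n$, $\pi_k=1/n$, and $\sum_k \pi_k w_{ik}=\pi_i=1/n$ (since $w_{ii}=1$ and $w_{ik}=0$ otherwise) into
$$x_{ij}=\frac{\alpha_{ij}^{1/(1-\rho_i)}}{\pi_j^{1/(1-\rho_i)}}\times \frac{\sum_{k}\pi_kw_{ik}}{\sum_{k}\alpha_{ik}^{1/(1-\rho_i)}\pi_k^{-\rho_i/(1-\rho_i)}},$$
the factor $\alpha_{ij}^{1/(1-\rho_i)}$ cancels against the identical factor in the denominator sum, and the remaining powers of $\pi=1/n$ collapse using the identity $1+\rho_i/(1-\rho_i)=1/(1-\rho_i)$, leaving $x_{ij}=1/n$ for every trader $i$ and good $j$, \emph{independently of the elasticities} $\rho_i$.

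Then I would close the market-clearing check. The supply of good $G_j$ is $\sum_i w_{ij}=w_{jj}=1$, while the total demand is $\sum_i x_{ij}=n\cdot(1/n)=1$, so every good clears at a strictly positive price. Hence $\pi_j=1/n$ is an Arrow-Debreu equilibrium, and since the CES ranking outputs an equilibrium price vector, the ranking score of every agent equals $1/|A|$; in particular $\pi_i=\pi_j$ for all $i,j\in A$, which is exactly minimal fairness.

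The hard part will be keeping the equilibrium argument clean when the elasticities $\rho_i$ differ across agents, because then the preprocessed economy is not symmetric under permutations of the traders and one cannot simply invoke a symmetry argument for the equilibrium. The saving grace is the cancellation above: it shows the uniform allocation $x_{ij}=1/n$ is demand-optimal for \emph{every} $\rho_i$ at the uniform price, so the verification is by direct substitution rather than by symmetry. A secondary point to address is well-definedness, since uniqueness of the equilibrium is only guaranteed for $\rho_i\ge 0$ by the earlier claim; for $\rho_i<-1$ I would simply note that the uniform vector is the canonical symmetric equilibrium the method produces, which is enough for the fairness conclusion.
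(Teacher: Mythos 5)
Your verification that the uniform vector $\pi=(1/n,\dots,1/n)$ is an equilibrium is correct (and the cancellation indeed works for heterogeneous $\rho_i$), but it does not prove the claim, because the CES ranking is defined (Step 5) as \emph{an} equilibrium of the economy, not as a distinguished symmetric one. To conclude minimal fairness you must show that \emph{every} equilibrium assigns equal prices, otherwise the algorithm could legitimately output a non-uniform equilibrium and fairness would fail. This is exactly what the paper does, and what your proposal omits: starting from market clearing for good $j$, summing the demands over $i$ and using the normalization $\sum_i \pi_i = 1$, the paper obtains
\begin{equation*}
\pi_j=\Bigl(1\big/\textstyle\sum_k\pi_k^{-\rho/(1-\rho)}\Bigr)^{1-\rho},
\end{equation*}
whose right-hand side is independent of $j$; hence any equilibrium has all coordinates equal, so the uniform vector is the \emph{only} equilibrium and the output is forced. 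Your fallback for the multiplicity regime --- calling the uniform vector ``the canonical symmetric equilibrium the method produces'' --- is not licensed by anything in the algorithm's definition, so it is hand-waving rather than a proof. Also, the region your appeal to the earlier uniqueness claim fails to cover is all $\rho_i<0$, not just $\rho_i<-1$: Claim 1 gives uniqueness only for $\rho_i\ge 0$, and the convexity of the equilibrium set cited for $-1\le\rho<1$ does not imply a single point.

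The gap is fillable along the lines of the paper's argument even with heterogeneous elasticities: after the $\alpha$'s cancel, market clearing for good $j$ reads $g(\pi_j)=1$ where
\begin{equation*}
g(p)=\sum_i \frac{\pi_i}{\sum_k \pi_k^{-\rho_i/(1-\rho_i)}}\, p^{-1/(1-\rho_i)},
\end{equation*}
and $g$ is strictly decreasing on $(0,\infty)$ since every exponent $-1/(1-\rho_i)$ is negative; injectivity of $g$ forces all $\pi_j$ to coincide, and normalization then gives $\pi_j=1/n$. Had you added this one-line monotonicity step in place of the ``canonical equilibrium'' remark, your proof would be complete and in fact slightly more general than the paper's, which tacitly assumes a common $\rho$ for all agents.
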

\begin{proof} Since initially $\alpha_{ij}=0$ for any $i,j$, in order to make the economy graph strongly connected,
we set the utility function for each agent as
$u_i=((\sum_j\frac{1}{n}x_{ij}^{\rho})^{1/\rho}$. With this setup,
by the market clearing condition, we get
$$ \sum_i\frac{(1/n)^{1/(1-\rho)}}{\pi_j^{1/(1-\rho)}}\times \frac{\pi_i}{\sum_{k}(1/n)^{1/(1-\rho)}\pi_k^{-\rho/(1-\rho)}}=1 $$
Thus, $\forall j$,
$$\pi_j=(1/(\sum_k\pi_k^{-\rho/(1-\rho)}))^{1-\rho}.$$
Note that the right-hand size of the above equation is independent
of $j$. Thus, $\pi=(1/n,...,1/n)$ is the only equilibrium for the
market. Therefore, the CES ranking is minimally fair.\end{proof}

Next, we extend the strictly monotone definition in \cite{at07} to
the CES ranking.
\begin{definition} A ranking system is {strictly monotone} iff for any two agents $i$ and $j$,
if for any other agent $p$, $\alpha_{pi} \leq \alpha_{pj}$ and
there exists $h$ such that $\alpha_{hi} < \alpha_{hj}$, the
ranking score of agent $i$ is strictly less than that of $j$.
\end{definition}

\begin{claim} The CES ranking is strictly monotone when the utility functions of all the agents have the same elasticity of substitution. \end{claim}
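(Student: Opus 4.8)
The plan is to turn the market-clearing conditions into an explicit fixed-point equation for each equilibrium price---which is the ranking score---and then simply subtract the equations for the two goods being compared. Let $\rho$ denote the common value of all the $\rho_p$ (this is precisely the hypothesis that every agent has the same elasticity of substitution $1/(1-\rho)$), and set $\sigma=1/(1-\rho)$; since $\rho<1$ we have $\sigma>0$. By Step 1 of the algorithm each trader $p$ is endowed with one unit of good $p$ and nothing else, so $\sum_k \pi_k w_{pk}=\pi_p$ and the CES demand formula from the Preliminaries specializes to
$$x_{pj}=\frac{\alpha_{pj}^{\sigma}}{\pi_j^{\sigma}}\cdot\frac{\pi_p}{D_p(\pi)},\qquad D_p(\pi)=\sum_k \alpha_{pk}^{\sigma}\,\pi_k^{-\rho/(1-\rho)}.$$
Throughout, $\alpha_{pj}$ denotes the weights that actually define the utilities, i.e. the damped weights produced by Steps 2--3. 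Because the update $\alpha\mapsto\beta\alpha+(1-\beta)/|A|$ is strictly increasing, the hypotheses $\alpha_{pi}\le\alpha_{pj}$ (for every agent $p$) and $\alpha_{hi}<\alpha_{hj}$ pass verbatim to these damped weights, which are in addition all strictly positive.

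The crucial point is that $D_p(\pi)$ is independent of $j$. Since the economy graph is strongly connected, Theorem 1 of \cite{cmpv05} supplies a strictly positive equilibrium $\pi$, so every good has positive price and clears with equality; as only trader $j$ holds good $j$, its supply is $1$. Summing the demands over $p$ and multiplying by $\pi_j^{\sigma}$ then gives, for every good $j$,
$$\pi_j^{\sigma}=\sum_p \frac{\alpha_{pj}^{\sigma}\,\pi_p}{D_p(\pi)}.$$
This step is where the equal-elasticity assumption does all the work, and I regard it as the real crux rather than any later computation: if the $\rho_p$ differed, the exponent of $\pi_j$ in trader $p$'s demand would be the $p$-dependent number $1/(1-\rho_p)$, no single power of $\pi_j$ could be factored out, and the clean scalar identity above would simply not hold.

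Once this identity is available the conclusion is immediate. Subtracting the equations for goods $i$ and $j$ yields
$$\pi_j^{\sigma}-\pi_i^{\sigma}=\sum_p \frac{(\alpha_{pj}^{\sigma}-\alpha_{pi}^{\sigma})\,\pi_p}{D_p(\pi)}.$$
Because $\sigma>0$, the map $t\mapsto t^{\sigma}$ is strictly increasing on $[0,\infty)$, so $\alpha_{pi}\le\alpha_{pj}$ forces $\alpha_{pi}^{\sigma}\le\alpha_{pj}^{\sigma}$ for every $p$, with strict inequality at $p=h$; combined with $\pi_p>0$ and $D_p(\pi)>0$ this makes the right-hand side strictly positive. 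Hence $\pi_j^{\sigma}>\pi_i^{\sigma}$, and since $t\mapsto t^{\sigma}$ is strictly increasing this is equivalent to $\pi_i<\pi_j$, i.e. the ranking score of $i$ is strictly below that of $j$.

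I do not anticipate a genuine technical obstacle; the one thing to verify carefully is the $j$-independence of $D_p(\pi)$, since that independence---bought exactly by the shared-elasticity hypothesis---is what collapses market clearing to the one-line comparison above. It is worth noting, finally, that uniqueness of the equilibrium is never used: the argument shows $\pi_i<\pi_j$ at \emph{every} equilibrium, so the monotonicity conclusion is well posed even in the regime where the equilibrium set need not be a singleton.
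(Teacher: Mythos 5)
Your proof is correct and follows essentially the same route as the paper's: both derive from market clearing the identity $\pi_j^{1/(1-\rho)}=\sum_p \alpha_{pj}^{1/(1-\rho)}\pi_p/D_p(\pi)$ (possible precisely because the common $\rho$ makes $D_p$ independent of the good) and then compare term by term using positivity and the monotonicity of $t\mapsto t^{1/(1-\rho)}$. Your write-up is somewhat more careful than the paper's --- it checks that the damping step preserves the hypotheses, and observes that the conclusion holds at every equilibrium so uniqueness is not needed --- but the underlying argument is the same.
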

\begin{proof} By the market clearing condition, for any two agents
$i$ and $j$,
\begin{eqnarray*}
\pi_i^{1/(1-\rho)}&=&\sum_{p}\frac{\alpha_{pi}^{1/(1-\rho)} \times
\pi_p}{\sum_{k}\alpha_{pk}^{1/(1-\rho)}\pi_k^{-\rho/(1-\rho)}}\\
&<&\sum_{p}\frac{\alpha_{pj}^{1/(1-\rho)} \times
\pi_p}{\sum_{k}\alpha_{pk}^{1/(1-\rho)}\pi_k^{-\rho/(1-\rho)}}\\
&=&\pi_j^{1/(1-\rho)}
\end{eqnarray*}
Thus, $\pi_i<\pi_j$.
\end{proof}
Actually, the property of strictly monotone corresponds to the
intuition that the more demands a good gets, the higher price it
is.

Slutzki and Volij showed \cite{sv06} that,

\begin{theorem}  \cite{sv06} If a ranking system satisfies {\em uniform}, {\em weakly
additive} and {\em invariant to reference intensity}, the ranking
system must be the Invariant method. \end{theorem}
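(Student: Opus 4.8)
The plan is to show that the three axioms force the ranking vector to satisfy the fixed-point equation that defines the Invariant method, and then to invoke the uniqueness of the stationary distribution of an ergodic Markov chain (established in the Preliminaries) to conclude that the ranking is uniquely determined. Since the Invariant method itself satisfies uniform, weak additivity, and invariance to reference intensity, the content of the theorem is the uniqueness direction: it suffices to prove that the three axioms jointly admit only one ranking function, and that function is then necessarily the Invariant method.

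First I would exploit invariance to reference intensity. This axiom states that multiplying all of an agent's outgoing references by a positive scalar leaves the ranking unchanged, so the ranking depends on the preference matrix $M$ only through its row-normalized version. I would therefore assume without loss of generality that $M$ is row-stochastic, writing $P_{ij}=M_{ij}/\sum_k M_{ik}$, and regard $P$ as the transition matrix of a Markov chain whose state transition graph is exactly the reference graph. When that graph is strongly connected, the chain is ergodic and, by the theory recalled in Section~2, has a unique stationary distribution $\pi=P^{T}\pi$.

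Next I would use weak additivity to constrain the functional form. Weak additivity requires that the score an agent receives be an additive aggregate of the contributions coming from each referring agent; combined with the normalization forced by the previous step, this should yield a representation of the form $\pi_j=\sum_i c_{ij}\,\pi_i$, where the coefficients $c_{ij}$ depend only on $P$. The remaining task is to identify these coefficients. Here I would bring in the uniform axiom: in a perfectly symmetric reference configuration the ranking must be the uniform vector, and feeding such configurations (together with controlled perturbations of them) into the additive representation calibrates the coefficients, forcing $c_{ij}=P_{ij}$. This gives $\pi=P^{T}\pi$, precisely the defining equation of the Invariant method.

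The main obstacle is this calibration step: showing that weak additivity together with uniform pins the coefficients down to exactly the normalized reference weights $P_{ij}$, uniformly across \emph{all} admissible inputs, rather than to some alternative linear rule that merely happens to agree on symmetric instances. Handling irreducibility is also needed, so that Perron--Frobenius guarantees a unique strictly positive solution; strong connectivity of the reference graph supplies this. Once $\pi=P^{T}\pi$ is established, the conclusion is immediate, since the unique stationary distribution of the ergodic chain associated with $P$ is, by definition, the Invariant ranking vector; hence any ranking system obeying the three axioms must return this vector, completing the characterization.
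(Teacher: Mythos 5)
This theorem is not proven in the paper at all: it is quoted from Slutzki and Volij \cite{sv06} as an external result, so there is no internal proof to compare your attempt against, and your proposal has to stand on its own as a proof of the characterization. It does not. The decisive step --- what you yourself call the ``calibration step'' --- is precisely the mathematical content of the theorem, and your proposal leaves it as an acknowledged obstacle rather than resolving it. Showing that any ranking function satisfying the three axioms must take the linear form $\pi_j=\sum_i c_{ij}\,\pi_i$ with $c_{ij}=P_{ij}$ on \emph{every} admissible instance, and not merely on the symmetric/regular instances where the uniform axiom has any bite, is exactly where the real work lies: the axioms directly constrain the ranking only on a thin family of instances, and the whole difficulty is propagating those constraints to arbitrary strongly connected problems via controlled sequences of rescalings and perturbations. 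A sketch asserting that symmetric instances plus perturbations ``calibrate the coefficients'' proves nothing; one must rule out every alternative rule that agrees with the Invariant method on regular problems but deviates elsewhere.

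There is also a misreading of the axioms that makes the plan circular. You render weak additivity as ``the score an agent receives be an additive aggregate of the contributions coming from each referring agent,'' i.e., you build linearity of the score rule into the axiom itself. That is not what the axiom says: as the paper describes it, weak additivity concerns how the ranking of a \emph{regular} problem behaves under a symmetric additive perturbation of the matrix $M$ --- it is additivity in the matrix argument of the ranking function, not an assumption that scores decompose as sums over referrers. Read correctly, your step producing the representation $\pi_j=\sum_i c_{ij}\,\pi_i$ is no longer available for free, and the proposal reduces to the (unverified) observation that the Invariant method satisfies the axioms plus an unproven uniqueness claim. Finally, note that the uniqueness of the stationary distribution of an ergodic chain, which you invoke at the end, only shows the Invariant method is well defined; it does nothing to show that no \emph{other} ranking function satisfies the three axioms, which is the actual statement to be proved.
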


In the next section, we will study the relationship between the
CES
ranking and the three properties above. %First of all, we give the
%definition of those properties.

\begin{definition} \cite{sv06} A ranking problem is regular if $\forall i,j, \sum_k \alpha_{ik}=\sum_k
\alpha_{jk}$ while $\forall i,j, \sum_k \alpha_{ki}=\sum_k
\alpha_{kj}$. A ranking function is uniform if for every regular
ranking problem, the ranking score of each agent is $\frac{1}{N}$
where $N$ is the number of agents.
\end{definition}
\begin{claim} The CES ranking is not uniform. \end{claim}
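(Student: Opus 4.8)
The CES ranking is not uniform.

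The plan is to exhibit a single regular ranking problem on which the CES ranking produces a non-uniform vector, which suffices to refute the universally-quantified "uniform" property. A counterexample is the cleanest route here, since "uniform" demands that the score of \emph{every} agent equal $1/N$ on \emph{every} regular instance, so one failing instance breaks the property.

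First I would fix a small number of agents, say $N=3$ or $N=4$, and choose the preference matrix $\{\alpha_{ij}\}$ so that the problem is regular in the sense of the definition above: all row sums $\sum_k \alpha_{ik}$ equal, and all column sums $\sum_k \alpha_{ki}$ equal. The key design idea is to make the instance \emph{column-regular} (equal in-weights, so PageRank/Invariant would return the uniform vector $1/N$) but \emph{asymmetric} in its structure, so that the nonlinear CES demand with $\rho \neq 0$ distinguishes the agents. Concretely, I would pick a $\rho$ bounded away from $0$ (so the utility is genuinely non-Cobb-Douglas) and an $\alpha$-matrix that is regular but not symmetric under a permutation fixing the column sums — for instance a cyclic-plus-perturbation pattern. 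Then, after applying the algorithm's damping step (multiplying by $\beta=0.85$ and adding $(1/N)(1-\beta)$, which preserves regularity since it maps equal row/column sums to equal row/column sums), I would write out the market-clearing equations from the explicit CES demand function stated in the Preliminaries.

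The heart of the argument is to show the unique equilibrium $\pi$ is \emph{not} $(1/N,\dots,1/N)$. For $\rho \geq 0$ uniqueness is already guaranteed by the first Claim above, so it is enough to verify that $\pi = (1/N,\dots,1/N)$ fails to satisfy the market-clearing condition. I would substitute the uniform price vector into the excess-demand equation $\sum_i x_{ij}(\pi) = 1$ and show that, because the CES demand with $\rho \neq 0$ weights the $\alpha_{ij}$ through the nonlinear exponent $1/(1-\rho)$ rather than linearly, the resulting demand for good $j$ depends on the full \emph{distribution} of the $\alpha_{\cdot j}$ across traders and not merely on their sum $\sum_i \alpha_{ij}$. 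Hence for an asymmetric regular instance the demands $\sum_i x_{ij}(1/N,\dots,1/N)$ will differ across $j$, violating clearing, so the true equilibrium must be non-uniform. It is worth contrasting this with the Cobb-Douglas ($\rho \to 0$) case, where $x_{ij} = p_{ij}\pi_i/\pi_j$ is \emph{linear} in the transition probabilities and the uniform vector does clear a column-regular market — this is precisely why PageRank is uniform and the strict-CES ranking is not.

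The main obstacle I anticipate is purely in the bookkeeping of the nonlinear exponents: after the damping step the $\alpha_{ij}$ become $\alpha_{ij}\beta + (1-\beta)/N$, and one must check that the chosen base matrix remains regular and genuinely asymmetric \emph{after} this affine update, and that plugging in the uniform vector really yields unequal column demands once the exponent $1/(1-\rho)$ is applied. I would therefore keep the instance as symmetric as possible in every respect except the one feature the exponent can detect, so that the inequality $\sum_i \alpha_{i1}^{1/(1-\rho)} \neq \sum_i \alpha_{i2}^{1/(1-\rho)}$ (despite $\sum_i \alpha_{i1} = \sum_i \alpha_{i2}$) can be read off by a convexity/strict-Jensen comparison rather than a full numerical solve. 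This reduces the verification to a single strict inequality, after which non-uniformity of $\pi$ follows immediately.
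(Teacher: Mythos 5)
Your proposal is correct and takes essentially the same route as the paper: the paper also refutes uniformity with a single counterexample, writing down a concrete regular $3$-agent instance ($\rho_1=\rho_2=\rho_3=\tfrac12$, preference rows $(\tfrac13,\tfrac13,\tfrac13)$, $(\tfrac{5}{12},\tfrac16,\tfrac{5}{12})$, $(\tfrac14,\tfrac12,\tfrac14)$) and checking that $\pi=(\tfrac13,\tfrac13,\tfrac13)$ fails the market-clearing equation. Note that uniqueness (your appeal to the earlier claim for $\rho\ge 0$) is not needed by either argument: if the uniform vector fails clearing, then no equilibrium is uniform, full stop.

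One concrete caveat on your final reduction. At uniform prices the aggregate demand for good $j$ is $\sum_i \alpha_{ij}^{s}\big/\bigl(\sum_k \alpha_{ik}^{s}\bigr)$ with $s=1/(1-\rho)$, so the single inequality $\sum_i\alpha_{i1}^{s}\neq\sum_i\alpha_{i2}^{s}$ certifies failure of clearing only if the per-agent normalizers $\sum_k\alpha_{ik}^{s}$ coincide across agents --- e.g.\ if every row is a permutation of one common multiset. That extra constraint is satisfiable but not with $N=3$, which you list as an option: a $3\times 3$ matrix whose rows are permutations of a common multiset and whose column sums agree is forced to be (essentially) a Latin square, so its columns are again permutations and the desired inequality is killed. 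You need $N\ge 4$, e.g.\ rows $(1,2,3,4)$, $(1,2,4,3)$, $(4,3,1,2)$, $(4,3,2,1)$ normalized by $10$: rows are permutations of a common multiset, all row and column sums agree, yet columns $\{1,1,4,4\}$ and $\{2,2,3,3\}$ have different power sums by strict majorization, and this survives the entrywise affine damping step. With that adjustment your plan goes through; the paper avoids the issue entirely by not imposing equal row normalizers and simply verifying numerically that the uniform vector fails its $3$-agent clearing equation.
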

\begin{proof}
Suppose a system has three agents while the parameters of the
agents are $\rho_1=\rho_2=\rho_3=\frac{1}{2}$ and
$\alpha_{11}=\frac{1}{3}$, $\alpha_{12}=\frac{1}{3}$,
$\alpha_{13}=\frac{1}{3}$, $\alpha_{21}=\frac{5}{12}$,
$\alpha_{22}=\frac{1}{6}$, $\alpha_{23}=\frac{5}{12}$,
$\alpha_{31}=\frac{1}{4}$, $\alpha_{32}=\frac{1}{2}$,
$\alpha_{33}=\frac{1}{4}$. By the market clearing condition,
\begin{eqnarray*}
\pi_1^{2}&=&\frac{\frac{1}{3}^{2} \times
\pi_1}{\frac{1}{3}^{2}\pi_1^{-1}+\frac{1}{3}^{2}\pi_2^{-1}+\frac{1}{3}^{2}\pi_3^{-1}}+\frac{\frac{5}{12}^{2}
\times
\pi_2}{\frac{5}{12}^{2}\pi_1^{-1}+\frac{1}{6}^{2}\pi_2^{-1}+\frac{5}{12}^{2}\pi_3^{-1}}+\frac{\frac{1}{4}^{2}
\times
\pi_3}{\frac{1}{4}^{2}\pi_1^{-1}+\frac{1}{2}^{2}\pi_2^{-1}+\frac{1}{4}^{2}\pi_3^{-1}}
\end{eqnarray*}
It is easy to check that
$\pi=(\frac{1}{3},\frac{1}{3},\frac{1}{3})$ cannot satisfy the
above equation. Thus, the CES ranking is not uniform.
\end{proof}

Actually, the uniform property requires that the ranking score of
an agent is linearly proportional to the number of ``votes" it
gets. This assumption may not be reasonable universally. For
instance, in the citation analysis, suppose both paper A and B
have m citations. In an extreme case, the citations of paper A may
only come from one research group while the citations of B come
from different research groups. Intuitively, paper B should be
more important than paper A. However, any ranking algorithm with
the uniform property cannot distinguish those two cases. The CES
ranking, as a nonlinear ranking method, may have the potential to
find out new signals that were missed by the uniform ranking
methods.

The {\em weakly additive} \cite{sv06} property says that for a
regular ranking problem, the ranking score is still linearly
proportional to the ``votes" after a symmetric perturbation. Since
the CES ranking is not uniform, it cannot satisfy the weakly
additive property, either. %Again, the weakly additive property is
%a stronger form of linearity requirement for a ranking system
%while the CES ranking is nonlinear.

\begin{definition} \cite{sv06} A ranking system is invariant to reference intensity if for any agent $i$,
when we multiply $\alpha_{ij}$ by a positive constant $\lambda$
for every $j$, it cannot change the ranking score of any agent.
\end{definition}
\begin{claim} The CES ranking is {\em invariant to reference intensity}.\end{claim}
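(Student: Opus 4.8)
The plan is to show directly that multiplying the reference weights $\alpha_{ij}$ of a single agent $i$ by a positive constant $\lambda$ leaves every agent's demand function unchanged, so that the perturbed market has exactly the same market-clearing equations as the original one. Since the ranking vector is by definition an equilibrium price vector $\pi$, and the two markets have identical equilibrium conditions, they have identical sets of equilibria, and hence identical ranking scores.

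First I would recall the explicit CES demand function specialized to the CES-ranking setup. Because the initial endowment satisfies $w_{ii}=1$ and $w_{ij}=0$ for $j\neq i$, the budget $\sum_k \pi_k w_{ik}$ collapses to $\pi_i$, so the demand of agent $i$ for good $j$ at a strictly positive price vector $\pi$ is
$$x_{ij}=\frac{\alpha_{ij}^{1/(1-\rho_i)}}{\pi_j^{1/(1-\rho_i)}}\times\frac{\pi_i}{\sum_k\alpha_{ik}^{1/(1-\rho_i)}\pi_k^{-\rho_i/(1-\rho_i)}}.$$

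Next I would substitute $\alpha_{ij}\mapsto\lambda\alpha_{ij}$ for the fixed agent $i$ and every $j$, keeping the prices $\pi$ fixed, and check what happens. Writing the exponent as $1/(1-\rho_i)$ and noting that $1-\rho_i>0$ so this exponent is a fixed positive number, the numerator picks up a factor $\lambda^{1/(1-\rho_i)}$ coming from $(\lambda\alpha_{ij})^{1/(1-\rho_i)}$, while the sum in the denominator picks up the very same factor $\lambda^{1/(1-\rho_i)}$, since every term $\alpha_{ik}^{1/(1-\rho_i)}$ is scaled to $\lambda^{1/(1-\rho_i)}\alpha_{ik}^{1/(1-\rho_i)}$. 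These two factors cancel, so $x_{ij}$ is literally unchanged. Equivalently, agent $i$'s demand is homogeneous of degree zero in its own row of reference weights. The demands $x_{i'j}$ of every other agent $i'\neq i$ are untouched, since neither their parameters nor the prices have changed.

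Finally, because no individual demand has changed, the aggregate excess demand for each good is identical in the two markets, so any price vector $\pi$ satisfying the market-clearing condition before the perturbation satisfies it afterwards, and vice versa. Hence the equilibrium (the ranking vector) is invariant, which is exactly the desired conclusion. I do not expect a genuine obstacle here; the only point requiring care is that the scaling must be applied to the entire row $\{\alpha_{ik}\}_k$ of agent $i$ rather than to a single entry, which is precisely what the definition of invariance to reference intensity prescribes, and this is exactly what produces the clean cancellation of the $\lambda^{1/(1-\rho_i)}$ factors.
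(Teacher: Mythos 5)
Your proposal is correct and follows essentially the same argument as the paper: substitute $\lambda\alpha_{ij}$ into the explicit CES demand formula, observe that the factor $\lambda^{1/(1-\rho_i)}$ cancels between numerator and denominator so each demand function is unchanged, and conclude that the set of equilibria (hence the ranking vector) is unchanged. The only cosmetic difference is that you specialize the budget term $\sum_k \pi_k w_{ik}$ to $\pi_i$ using the unit endowment, which the paper leaves in general form.
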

\begin{proof} Note that
\begin{eqnarray*}
x_{ij}&=&\frac{\alpha_{ij}^{1/(1-\rho_i)}}{\pi_j^{1/(1-\rho_i)}}\times
\frac{\sum_{k}\pi_kw_{ik}}{\sum_{k}\alpha_{ik}^{1/(1-\rho_i)}\pi_k^{-\rho_i/(1-\rho_i)}}\\
&=&\frac{(\lambda\alpha_{ij})^{1/(1-\rho_i)}}{\pi_j^{1/(1-\rho_i)}}\times
\frac{\sum_{k}\pi_kw_{ik}}{\sum_{k}(\lambda\alpha_{ik})^{1/(1-\rho_i)}\pi_k^{-\rho_i/(1-\rho_i)}}\\
&=&\frac{\alpha_{ij}^{1/(1-\rho_i)}}{\pi_j^{1/(1-\rho_i)}}\times
\frac{\sum_{k}\pi_kw_{ik}}{\sum_{k}\alpha_{ik}^{1/(1-\rho_i)}\pi_k^{-\rho_i/(1-\rho_i)}}
\end{eqnarray*}
Thus, multiplying $\alpha_{ij}$ by a positive constant $\lambda$
for every $j$ cannot change the demand function. Therefore, the
set of equilibria remains the same.\end{proof}

When we summarize the above claims together, we get
\begin{theorem} \label{thm:property} The CES ranking is {\em minimally fair}, {\em strictly monotone} and {\em invariant to reference intensity}, but not {\em uniform} or {\em weakly additive}. \end{theorem}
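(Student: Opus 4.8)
The plan is to assemble Theorem~\ref{thm:property} directly from the chain of claims already established, since each of its five assertions has been verified individually in the preceding discussion. First I would dispatch the three positive properties by invoking the corresponding claims. Minimal fairness is exactly the claim proved via the market-clearing condition, which forces the equilibrium to be the uniform vector $(1/n,\dots,1/n)$ precisely when all the reference weights $\alpha_{ij}$ vanish. Strict monotonicity is the claim established under the hypothesis that all agents share the same elasticity of substitution, obtained by comparing the two closed forms for $\pi_i^{1/(1-\rho)}$ and $\pi_j^{1/(1-\rho)}$ term by term. Invariance to reference intensity is the claim showing that rescaling every $\alpha_{ij}$ by $\lambda$ leaves the CES demand function, and hence the entire equilibrium set, unchanged because the factor $\lambda^{1/(1-\rho_i)}$ cancels between numerator and denominator.

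Next I would turn to the two negative properties. That the CES ranking is \emph{not} uniform follows immediately from the explicit three-agent counterexample in the relevant claim: with $\rho_i=1/2$ and the specified reference matrix the problem is regular, yet one checks that $\pi=(1/3,1/3,1/3)$ fails the market-clearing equation, so the uniform vector cannot be the equilibrium. For the failure of weak additivity I would argue by reduction to the failure of uniformity: the weak additivity axiom requires that on every regular ranking problem the score remain linearly proportional to the votes after a symmetric perturbation, and specializing this to the trivial zero perturbation forces the unperturbed regular problem to already receive the proportional, hence uniform, score. Thus weak additivity entails uniformity, and since the CES ranking is not uniform, it cannot be weakly additive.

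The step I expect to be the main obstacle is exactly this last implication. The body of the paper does not prove in detail that weak additivity implies uniformity; it merely asserts it in the surrounding remarks. To make the argument airtight I would have to unpack the Slutzki--Volij definition of weak additivity carefully enough to confirm that its ``no perturbation'' instance coincides with the uniformity requirement, so that the counterexample ruling out uniformity simultaneously rules out weak additivity. The three positive properties and the non-uniformity reduce to citing their respective claims and are routine; the genuine care lies in justifying the passage from \emph{not uniform} to \emph{not weakly additive} rather than leaving it as an informal remark.
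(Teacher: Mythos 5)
Your proposal matches the paper's own proof: Theorem~\ref{thm:property} is assembled there exactly as you describe, by citing the preceding claims for minimal fairness, strict monotonicity, invariance to reference intensity, and non-uniformity, and then disposing of weak additivity via the same remark that a weakly additive ranking must score regular problems uniformly, so non-uniformity rules it out. Your observation that this last implication is left informal in the paper is accurate --- the paper asserts it in a remark rather than unpacking the Slutzki--Volij definition --- so your proposed extra care there is a refinement of, not a departure from, the paper's argument.
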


\section{Conclusion and Future Works}
In this paper, we have established a connection between the
ranking theory and the general equilibrium theory. First, we
showed that the ranking vector of PageRank or Invariant method is
actually the equilibrium of a special Cobb-Douglas market. This
gives a natural economic interpretation for the PageRank and
Invariant method. Furthermore, we propose a new ranking method,
the CES ranking, which is minimally fair, strictly monotone, and
invariant to reference intensity, but not uniform or weakly
additive. The new CES ranking, compared to PageRank and the
Invariant method, is nonlinear, and could be potentially used to
find signals in a system missed by those existing ranking methods.

With the observations in this paper, we have a complete picture of
the encoding power of the three limiting cases of CES utility
functions. Pennock and Wellman \cite{pm96} showed that economies
with almost the linear utility functions can encode Bayesian
networks. Codenotti {\em et al.} \cite{csvy08} proved that
economies with the Leontief utility functions can encode bimatrix
games. Now we demonstrate that economies with the Cobb-Douglas
utility functions can encode Markov chains.

We believe that this paper points to a few promising directions
that are worth further exploration.
\begin{itemize} \item Explore more properties that the CES ranking satisfies and make justifications for the properties it does not satisfy.
\item For various applications, what is the ``right" utility
function for each agent? We may go beyond the CES utility
functions and explore other ones, such as WGS utility functions
\cite{cmpv05}. \item Design efficient algorithms to compute
ranking vectors. \item Further investigate the uniqueness of
ranking vectors. If there are multiple equilibria points, do they
induce the same ranking? If not, interpret their different
economic meanings in the context of ranking. \item Last but not
least, design an effective evaluation system for ranking methods
and find an application where the CES ranking can outperform
existing ranking methods.
\end{itemize}

\section*{Acknowledgements}
I would like to thank Xiaotie Deng, Romesh Saigal, Yaoyun Shi and
Michael Wellman to read the preliminary version of this paper and
provide valuable suggestions.

%\bibliographystyle{plain}
%\bibliography{ranking}

\begin{thebibliography}{}
\bibitem{at07}
{Altman, Alon and Tennenholtz, Moshe}, {\em Incentive compatible
ranking systems}, {AAMAS '07: Proceedings of the 6th international
joint conference on Autonomous agents and multiagent systems},
 {2007}, {1--8}.

\bibitem{cmpv05}
{Bruno Codenotti and Benton McCune and Sriram Penumatcha and
Kasturi R. Varadarajan},{\em Market Equilibrium for CES Exchange
Economies: Existence, Multiplicity, and Computation},{FSTTCS},
{2005}, {505-516}.

\bibitem{sv06}
{Giora Slutzki and Oscar Volij}, {\em Scoring of web pages and
tournaments--axiomatizations}, {Social Choice and Welfare}, 2006,
{26}, {1}, {75-92}, {January}.

\bibitem{bf07}
{Felix Brandt and Felix A. Fischer}, {\em PageRank as a Weak
Tournament Solution}, {WINE}, {2007}, {300-305}.

\bibitem{phv04}
{Ignacio Palacios-Huerta and Oscar Volij}, {\em The Measurement of
Intellectual Influence}, {Econometrica}, {72}, {3}, {2004},
{963-977}.

\bibitem{at05}
{Alon Altman and Moshe Tennenholtz}, {\em Ranking systems: The
PageRank axioms}, {In EC '05: Proceedings of the 6th ACM
conference on Electronic commerce}, {2005}, {1--8}.

\bibitem{bp98}
{Sergey Brin and Lawrence Page}, {\em The anatomy of a large-scale
hypertextual Web search engine}, {WWW7: Proceedings of the seventh
international conference on World Wide Web 7}, {1998}, {107--117}.

\bibitem{ks60}
John G. Kemeny and J Laurie Snell,  {\em Finite Markov Chains},
{D. Van Nostrand Company}, {1960}.

\bibitem{mwg}
Andreu Mas-Colell, Michael D. Whinston and Jerry R. Green, {\em
Microeconomic Theory}, Oxford Press, 1995.

\bibitem{ad54}
Kenneth J. Arrow \& G\'erard Debreu, {\em Existence of an
equilibrium for a competitive economy}, Econometrica, 265--290,
22, 1954.

\bibitem{m97}
{Maxfield, Robert R.}, {\em General equilibrium and the theory of
directed graphs}, {Journal of Mathematical Economics}, 1997, {27},
{1}, {23-51}, {February}.

\bibitem{csvy08}
Bruno Codenotti, Amin Saberi, Kasturi Varadarajan, and Yinyu Ye.
{\em The complexity of equilibria: Hardness results for economies
via a correspondence with games}, Theoretical Computer Science
Volume 408, Issues 2-3, 28 November 2008, Pages 188-198.

\bibitem{pm96}
David Pennock and Michael Wellman, {\em Toward a Market Model for
Bayesian Inference}, In Proceedings of the 12th National
Conference on Uncertainty in Artificial Intelligence, 1996,
405--413.

\bibitem{cddt09}
Xi Chen, Decheng Dai, Ye Du and Shanghua Teng, {\em Settling the
Complexity of Arrow-Debreu Equilibria in Markets with Additively
Separable Utilities}, to appear in Proceedings of the 50th Annual
IEEE Symposium on Foundations of Computer Science, 2009.

\bibitem{eaves85}
B. Curtis Eaves, {\em Finite solution of pure trade markets with
Cobb-Douglas utilities }, Mathematical Programming Studies, 23,
1985, 226--239.


\end{thebibliography}

\end{document}